\newtheorem*{lemma}{Lemma}
\newtheorem*{theorem}{Theorem}
\begin{document}

\title{Can different quantum state vectors correspond to the same physical state? An experimental test}

\author{Daniel Nigg$^{1}$}
\altaffiliation{* These authors contributed equally to this work.}
\author{Thomas Monz$^{1}$}
\altaffiliation{* These authors contributed equally to this work.}
\author{Philipp Schindler$^{1}$}
\author{Esteban A. Martinez$^{1}$}
\author{Michael Chwalla$^{2}$}
\author{Markus Hennrich$^{1}$}
\author{Rainer Blatt$^{1,2}$}
\author{Matthew F. Pusey$^{3}$}
\author{Terry Rudolph$^{3}$}
\author{Jonathan Barrett$^{4}$} 

\affiliation{$^{1}$Institut f\"ur Experimentalphysik, Universit\"at Innsbruck, Technikerstr. 25, A-6020 Innsbruck, Austria\\
$^{2}$Institut f\"ur Quantenoptik und Quanteninformation, \"Osterreichische Akademie der Wissenschaften, Technikerstr.~21A, A-6020 Innsbruck, Austria\\
$^{3}$Department of Physics, Imperial College London, Prince Consort Road, London SW7 2AZ, United Kingdom
$^{4}$Department of Computer Science, University of Oxford, Parks Road, Oxford OX1 3QD United Kingdom}

\maketitle

\textbf{A century on from the development of quantum theory, the interpretation of a quantum state is still discussed. If a physicist claims to have produced a system with a particular wave function, does this represent directly a physical wave of some kind, or is the wave function merely a summary of knowledge, or information, about the system? A recent no-go theorem shows that models in which the wave function is not physical, but corresponds only to an experimenter's information about a hypothetical real state of the system, must make different predictions from quantum theory when a certain test is carried out. Here we report on an experimental implementation using trapped ions. Within experimental error, the results confirm quantum theory. We analyse which kinds of theories are ruled out.}

In the development of quantum theory, progress has been made when apparently vague questions are made precise, and a no-go theorem proved. The best known no-go theorem is perhaps Bell's theorem \cite{bell}, which shows that (in Bell's terminology) \emph{locally causal} theories cannot reproduce the quantum predictions for measurements performed on separated entangled systems. Many experiments have confirmed the quantum predictions in this case. Strictly speaking there remain loopholes, which no experiment to date has closed simultaneously. It is nonetheless safe to say that all but highly conspiratorial locally causal models have been ruled out.

Similarly, Ref\cite{PBR} (following \cite{nic} and \cite{hpriv}) proves a no-go theorem that shines light on the controversies above. The question addressed is a rather simple one: can two distinct quantum state vectors sometimes describe the same physical reality? Under the assumption that independently prepared systems have independent physical states, it is shown that the answer is no -- any such model must make different predictions from quantum theory when a certain carefully designed test is performed. 

This work reports an experimental implementation of the test from Ref.\cite{PBR} using trapped ions. In the following, we introduce the question in more detail and describe briefly the original proposal of Ref.~\cite{PBR}. A real experiment (with inevitable imperfections) will never be able to rule out the full class of models that the theorem states. Instead, as explained below, a well conducted experiment is able to place bounds on the degree to which similar but distinct quantum states can represent similar information about the system. We identify a natural subclass of models that our experiment is able to rule out. The experimental setup is then described, followed by the experimental results and their interpretation. 

\section{What is information and what is physical?}

Why isn't the conclusion of Ref.~\cite{PBR} -- that distinct quantum states correspond to distinct states of reality -- obvious? How could it be that they sometimes correspond to the same state of reality, given that via the Born rule, different quantum states lead to different experimental predictions?

Imagine a die shaken randomly in a container that has internal components which prevent the die being removed unless the value showing upperside is even. The filtering action of the container ensures the probability distribution one would assign to the upperside value of a die (out of [1,2,3,4,5,6]) taken from it is $p_{\mathrm{even}}=[0,1/3,0,1/3,0,1/3]$. Imagine a similar container exists that contrives to ensure the value shown is prime, so the probability distribution assigned is $p_{\mathrm{prime}}=[0,1/3,1/3,0,1/3,0]$. Given a die and asked to determine which container was used to prepare it, one is not certain to succeed, because one particular configuration of the die -- the number 2 -- is consistent with either preparation procedure.

Now consider a quantum system that can be prepared in two different ways corresponding to non-orthogonal quantum states. Here again, given the quantum system and asked to determine which preparation procedure was used, one cannot always succeed. As is well known, it is not possible to discriminate between non-orthogonal quantum states with certainty.

In the example of the die, the reason why one cannot guess the preparation with certainty is because the same state of reality (the number 2) is assigned non-zero probability by each of the two probability distributions, $p_{\mathrm{even}}$ and $p_{\mathrm{prime}}$. The impossibility of distinguishing non-orthogonal quantum states would receive a very natural explanation if something similar were true -- if the two different quantum states did, in fact, sometimes describe the same underlying physical state of the system.

Let us hypothesise, therefore, that this is the case and consider the following schematic account of how a physical theory might describe a quantum system. It is similar to the formalism used in derivations of Bell's theorem, with the difference that we will not assume locality. First, the quantum system has an objective physical state of some kind, denoted $\lambda$, where objective means that $\lambda$ is independent of the experimenter and of other physical systems. For our purposes, this assumption only needs to hold for systems prepared in isolation and assigned a pure quantum state. If a measurement is performed on the system, the probabilities for different outcomes are determined by $\lambda$. Given an ensemble of systems, each prepared in such a way that quantum theory assigns a quantum state $\ket{\psi}$, it is not necessarily the case that $\lambda$ is the same for each member of the ensemble, hence we assume that a quantum state $\ket{\psi}$ corresponds to a probability distribution $\mu_\psi(\lambda)$.

The possibility raised above is that the distributions $\mu_0(\lambda)$ and $\mu_1(\lambda)$ overlap for distinct quantum states $\ket{\psi_0}$ and $\ket{\psi_1}$. It is, however, a non-trivial question whether models of this form actually exist which reproduce the predictions of quantum theory. The question has been raised by Harrigan and Spekkens \cite{nic} and by Hardy \cite{hpriv}.  Harrigan and Spekkens refer to any model in which the distributions $\mu_0(\lambda)$ and $\mu_1(\lambda)$ overlap for some non-identical $\ket{\psi_0}$ and $\ket{\psi_1}$ as $\psi$-\emph{epistemic}. A model in which $\mu_0(\lambda)$ and $\mu_1(\lambda)$ are disjoint for any non-identical $\ket{\psi_0}$ and $\ket{\psi_1}$ is called $\psi$-\emph{ontic}. For a single qubit, a $\psi$-\emph{epistemic} model was provided by Kochen and Specker~\cite{ksmodel}. For larger--dimensional Hilbert spaces, a model wherein some quantum states are represented by overlapping distributions is described in Ref.~\cite{ljbr}, which was recently extended to a model where all quantum states overlap \cite{scott}.

Quantum theory, however, allows for more complicated protocols than separate measurements on single systems. In Ref.~\cite{PBR}, such a protocol was identified, for which the quantum predictions cannot be reproduced using a $\psi$-epistemic model. The protocol begins with the independent preparation of $n$ systems, each according to one of two pure quantum states $\ket{\psi_0}$and $\ket{\psi_1}$. The joint quantum state is a direct product $\ket{\psi_{x_1}}\otimes \ket{\psi_{x_2}}\otimes\cdots\otimes \ket{\psi_{x_n}}$, where $x_i\in \{0,1\}$. Assume that these systems have separate underlying physical states $\lambda_1,\ldots ,\lambda_n$. Importantly, the model also assumes that when systems are prepared independently, the physical states are independent from one another, meaning that the joint distribution is given by~\footnote{The model of joint probability distributions can be slightly weakened~\cite{hall,fine}, or replaced by an entirely different assumption~\cite{crthm,hardythm}}
\begin{equation}\label{independence}
\mu_{x_1}(\lambda_1)\times \mu_{x_2}(\lambda_2)\times\cdots\times \mu_{x_n}(\lambda_n).
\end{equation}
The systems are then brought together, and an entangled joint measurement is made such that according to quantum theory, the first outcome is zero if the quantum state is $\ket{\psi_0}\otimes\ket{\psi_0}\otimes\dotsb\otimes\ket{\psi_0}$, the second outcome is zero if the quantum state is $\ket{\psi_0}\otimes\ket{\psi_0}\otimes\dotsb\otimes\ket{\psi_1}$, and so on. It is shown in Ref.~\cite{PBR} that for any distinct $\ket{\psi_0}$, $\ket{\psi_1}$, such a measurement exists for large enough $n$. These predictions are incompatible with the quantum state vectors $\ket{\psi_0}$ and $\ket{\psi_1}$ being represented by overlapping distributions $\mu_0(\lambda)$ and $\mu_1(\lambda)$, which would always result in non-zero probabilities.

This argument relies on certain probabilities being exactly zero, which will never be experimentally reproduced. What can be concluded if these probabilities are merely close to zero? 
Consider again a classical system, such as the die above, which is equally likely to have been prepared according to one of two probability distributions $p=(p_1,\ldots,p_n)$ and $q=(q_1,\ldots ,q_n)$. The maximal probability of guessing correctly which preparation was used is $(1+D(q,p))/2$, where the \emph{trace distance} $D$ is defined by
\begin{equation}
D(q,p)=\frac{1}{2}\sum_i |q_i-p_i|.
\end{equation}
More generally, the trace distance between two probability distributions is defined by a similar expression, with the sum replaced by an integral if necessary. The trace distance is $1$ if the distributions are completely disjoint (and thus perfectly distinguishable), and it is $0$ if they are identical. The \emph{quantum trace distance} for pure states is
\begin{equation}
D_Q(\ket{\psi_0}, \ket{\psi_1}) = \sqrt{1 - \left\lvert{\braket{\psi_0 | \psi_1}}\right\rvert^2}.
\end{equation}
Given quantum states $\ket{\psi_0}$ and $\ket{\psi_1}$, which are a priori equally likely, if an optimal measurement for discriminating the two is performed, the probability of guessing correctly is $(1+D_Q(\ket{\psi_0},\ket{\psi_1}))/2$. This implies that any $\psi$-epistemic model capable of reproducing the quantum predictions for the state discriminating measurement must satisfy
\begin{equation}\label{maxepiscondition}
D(\mu_0,\mu_1) \geq D_Q(\ket{\psi_0}, \ket{\psi_1}) .
\end{equation}

The protocol described in Ref.\cite{PBR} goes further. It shows that for any pair of distinct quantum states $\ket{\psi_0}$ and $\ket{\psi_1}$, if a suitable experiment is performed, then a lower bound can be placed on $D(\mu_0,\mu_1)$. The better the data match quantum predictions, the closer this bound is to $1$. As might be expected, the experiment is harder to perform for states $\ket{\psi_0}$ and $\ket{\psi_1}$ that are close together.

The aim of this work is to rule out experimentally the case in which 
\begin{equation}\label{equaltracedistances}
D(\mu_0,\mu_1) = D_Q(\ket{\psi_0},\ket{\psi_1}).
\end{equation}
This corresponds to a natural class of models, since if Eq.(\ref{equaltracedistances}) is satisfied, the probability of making an error when attempting to discriminate non-orthogonal quantum states does not need to be explained by any kind of uniquely quantum effect, but is in fact \emph{entirely due} to the ordinary classical difficulty of distinguishing the corresponding distributions $\mu_0$ and $\mu_1$. 
An example of a model that satisfies this requirement is the qubit model of Kochen and Specker mentioned above \cite{lpriv}. (The idea that the impossibility of reliably distinguishing non-orthogonal quantum states is entirely due to overlapping $\mu_0$ and $\mu_1$ has also been considered in Refs.\cite{morris, maroney, leifermaroney}, but with alternative definitions of the distances between classical probability distributions.)   

We report a careful implementation of the protocol of Ref.~\cite{PBR} such that probabilities inferred from the experimental data are close to zero when the quantum prediction is zero. As mentioned above, it is then possible to derive a lower bound on the trace distance $D(\mu_0,\mu_1)$. In the appendix, we derive an alternative lower bound to that of Ref.~\cite{PBR}, specially adapted to the present experimental context. Using this, our experimental results rule out models satisfying Eq.(\ref{equaltracedistances}).

\section{Experimental Setup}

The experiments described here were realized in a system consisting of $^{40}$Ca$^{+}$ ions
which are confined to a string in a linear Paul trap~\cite{exp}.
Each ion represents a qubit which is encoded in the electronic levels $S_{1/2}(m
= -1/2)=\ket{1}$ and $D_{5/2}(m = -1/2)=\ket{0}$. Each experimental cycle consists of an
initialization of the ions in their internal electronic and motional
ground states followed by a coherent manipulation of the qubits and
finally a detection of the quantum state. State initialization is
realized by optical pumping into the $S_{1/2}(m = -1/2)$ state
after cooling the axial centre-of-mass mode to the motional
ground state. The manipulation of the qubits is implemented by
coherently exciting the $S_{1/2} \leftrightarrow D_{5/2}$ quadrupole
transition with laser pulses. Finally, the population of the qubit
states is measured by exciting the $S_{1/2} \leftrightarrow
P_{1/2}$ transition and detecting the fluorescence, using
electron shelving~\cite{Dehmelt}. Our setup is capable of realizing
collective qubit rotations
\begin{equation}
  U(\theta,\phi) =
\exp\left(-i\frac{\theta}{2}\sum_{i}\left[\sin(\phi)\sigma_{y}^{(i)}-\cos(\phi)\sigma_{x}^{(i)}\right]\right)
\end{equation}
via a laser beam addressing the entire register as well as
M{\o}lmer-S{\o}renson entangling gates \cite{ms,ms_exp}
\begin{equation}
  MS(\theta,\phi) =
\exp\left(-i\frac{\theta}{4}\biggl[\sum_{i}(\sin(\phi)\sigma_{y}^{(i)}-\cos(\phi)\sigma_{x}^{(i)})\biggr]^{2}\right).
\end{equation}
Additionally we are able to perform single-qubit rotations on
the $i$-th ion of the form $U_{z}^{(i)}(\theta) =
\exp\left(-i\frac{\theta}{2}\sigma_{z}^{(i)}\right)$ using an off-resonant laser
beam which addresses individual ions.

The first step of the protocol is the preparation of one of the
input states $\{\ket{\phi_{0}}\otimes\ket{\phi_{0}}, \ket{\phi_{0}}\otimes\ket{\phi_{1}},
\ket{\phi_{1}}\otimes\ket{\phi_{0}},\ket{\phi_{1}}\otimes\ket{\phi_{1}}\}$ with
$\ket{\phi_{0}} = \cos(\frac{\pi}{8})\ket{1} + \sin(\frac{\pi}{8})\ket{0}$ and $\ket{\phi_{1}} = \cos(\frac{\pi}{8})\ket{1} - \sin(\frac{\pi}{8})\ket{0}$.
The input states are generated by a global rotation $U(\frac{\pi}{4},-\frac{\pi}{2})$, which maps $\ket{1}\otimes\ket{1}$ to $\ket{\phi_0}\otimes\ket{\phi_0}$, followed by
single-qubit rotations $U_{z}^{(1,2)}(\pi)$ on the first, second or both ions, which maps $\ket{\phi_0}$ to $\ket{\phi_1}$. These sequences are shown in Fig.~\ref{sequ}(b).

\begin{figure}
\includegraphics[scale=0.8]{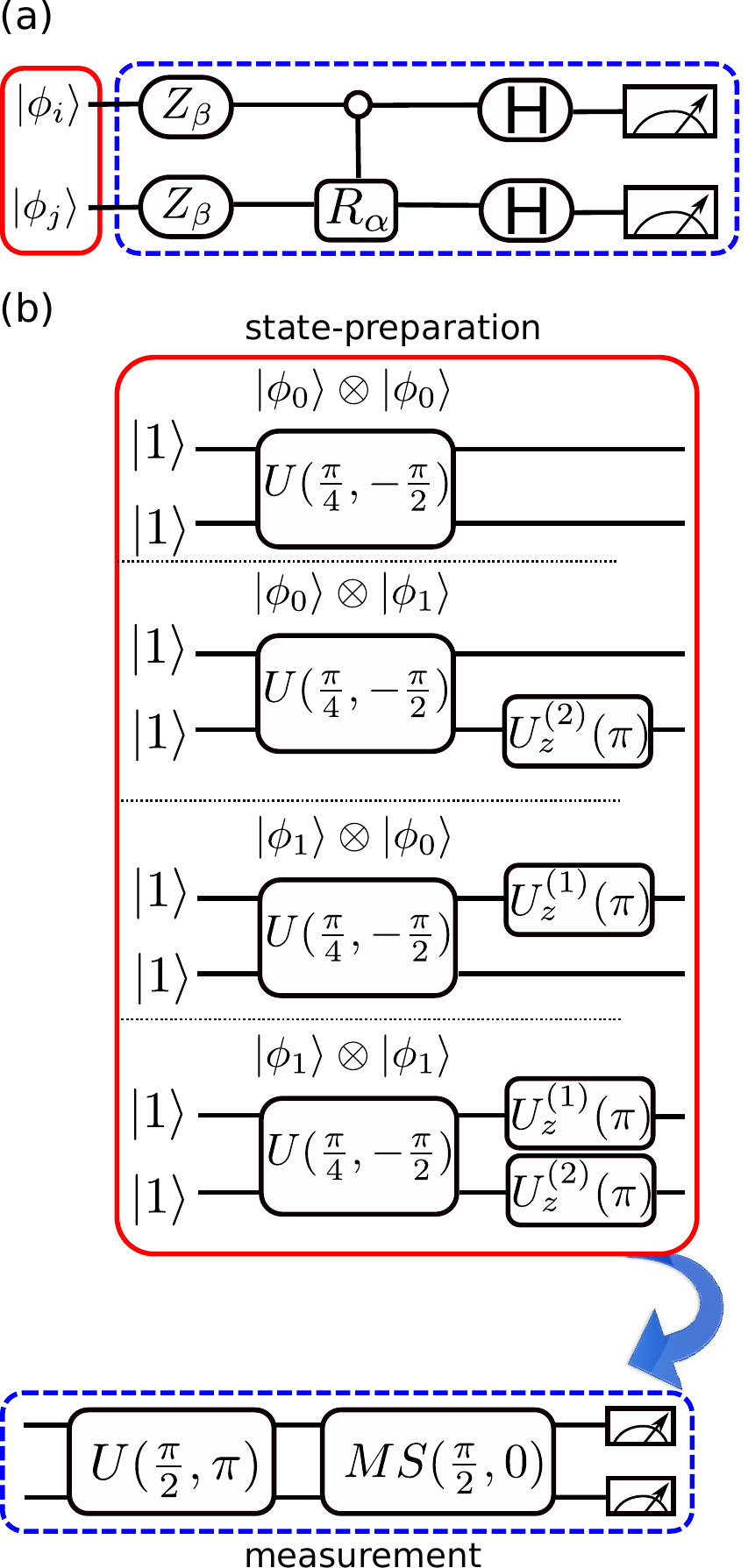}
\caption{Experimental implementation of (a) the measurement protocol consisting of the input state
preparation followed by a joint measurement of the whole system and the (b) corresponding
pulse-sequence.}
\label{sequ}
\end{figure}

The second part of the protocol is a joint measurement on the system with the property that each
measurement outcome should have probability zero for one of the four input states. Such a measurement can, in general, be realized
by rotations $Z_{\beta} = \ket{0}\bra{0}+\exp(i\beta)\ket{1}\bra{1}$ followed by a
conditional phase gate $R_{\alpha}$, with $R_{\alpha} \ket{11} = \exp(i\alpha) \ket{11}$, Hadamard gates H and finally
a measurement in the computational basis, shown in Fig.~\ref{sequ}(a).

Our pulse sequence for the measurement procedure is shown in Fig.~\ref{sequ}(b). For the input states used here, $\alpha = \pi$ and $\beta = 0$, which reduces $Z_{\beta}$ to the identity.
Up to local phases (which do not matter because the final step is a measurement in the computational basis) the remaining two operations are equivalent to a global rotation $U(\frac{\pi}{2},\pi)$ followed by a maximally entangling gate $MS(\frac{\pi}{2},0)$.

In an ideal experiment, the single ion pulses used to select different input states would have no effect on the other ion. This requirement is in fact only fulfilled to a certain
degree of accuracy due to residual light on the neighbouring ion. If, for example,
the phase shift operation $U^{(2)}_{z}(\pi)$ is applied on the second qubit, a residual phase shift
$U^{(1)}_{z}(\kappa\pi)$ occurs on the first ion, with $\kappa$ on the order of 1\%. Hence, instead of the ideal input state $\ket{\phi_{0}}\otimes\ket{\phi_{1}}$, we have $\ket{\phi_{0}'}\otimes\ket{\phi_{1}}$, where $\ket{\phi_{0}'} = e^{i\kappa\pi}\cos\left( \frac\pi8 \right)\ket{1} + \sin\left(\frac\pi8\right)\ket{0}$. Due to the residual light in both directions, $\ket{\phi_1}\otimes\ket{\phi_1}$ becomes $\ket{\phi_1'}\otimes\ket{\phi_1'}$ where $\ket{\phi_{1}'} = e^{i\kappa\pi}\cos\left( \frac\pi8 \right)\ket{1} - \sin\left(\frac\pi8\right)\ket{0}$.

Overall, then, in place of the original protocol's preparations $\ket{\phi_0}\otimes\ket{\phi_0}$, $\ket{\phi_0}\otimes\ket{\phi_1}$, $\ket{\phi_1}\otimes\ket{\phi_0}$, $\ket{\phi_1}\otimes\ket{\phi_1}$, we instead have $\ket{\phi_0}\otimes\ket{\phi_0}$, $\ket{\phi_0'}\otimes\ket{\phi_1}$, $\ket{\phi_1}\otimes\ket{\phi_0'}$, $\ket{\phi_1'}\otimes\ket{\phi_1'}$. Since the
laser field can be defined as a product of coherent states across the ions, it is reasonable to model these as independent preparations:
\begin{equation} \mu_{00}(\lambda_1, \lambda_2) = \mu_0(\lambda_1)\mu_0(\lambda_2), \end{equation}
\begin{equation} \mu_{0'1}(\lambda_1, \lambda_2) = \mu_{0'}(\lambda_1)\mu_1(\lambda_2), \end{equation}
\begin{equation} \mu_{10'}(\lambda_1, \lambda_2) = \mu_1(\lambda_1)\mu_{0'}(\lambda_2), \end{equation}
\begin{equation} \mu_{1'1'}(\lambda_1, \lambda_2) = \mu_{1'}(\lambda_1)\mu_{1'}(\lambda_2). \end{equation}

It is shown in the supplementary information that if this modified protocol is implemented, and the outcomes that never occur in the ideal protocol happen with probability $\epsilon$, then the relevant classical trace distances satisfy

\begin{equation}
  D(\mu_0, \mu_1) + D(\mu_0, \mu_{0'}) + D(\mu_1, \mu_{1'}) \geq 1 - 2\sqrt{\epsilon}.
  \label{Deq}
\end{equation}

Supposing that all three classical trace distances were equal to the corresponding quantum trace distances, we would obtain $\epsilon \geq 1.83\%$, and so an experiment with a smaller value rules out the $\psi$-epistemic hypothesis. For simplicity, we are treating the ions identically, whereas in fact the actions of the laser on each ion will be slightly different. In the supplementary information it is shown that this consideration does not significantly alter the lower bound on $\epsilon$.

Ideally, each outcome would have probability zero for some input state. In fact these nonzero outcomes occur with small probabilities $\epsilon_{k}$. For each of the input states the measurement was repeated up to 10000 times to gain significant statistics. Fig.~\ref{result} shows the measured probabilities.
We find $\{\epsilon_1,\epsilon_2, \epsilon_3, \epsilon_4\} =\{1.88\%, 1.04\%, 0.6\%, 1.04\%\}$,
which yields a mean value of $\epsilon = (1.14\pm0.15)\%$. This violates the bound of Eq.~\eqref{Deq} given by our trace distance hypothesis by over 4.5$\sigma$.
Therefore the probability of this data to be consistent with a $\psi$-\emph{epistemic} model is less than $10^{-5}$.

\begin{figure*}
\includegraphics[scale=0.9]{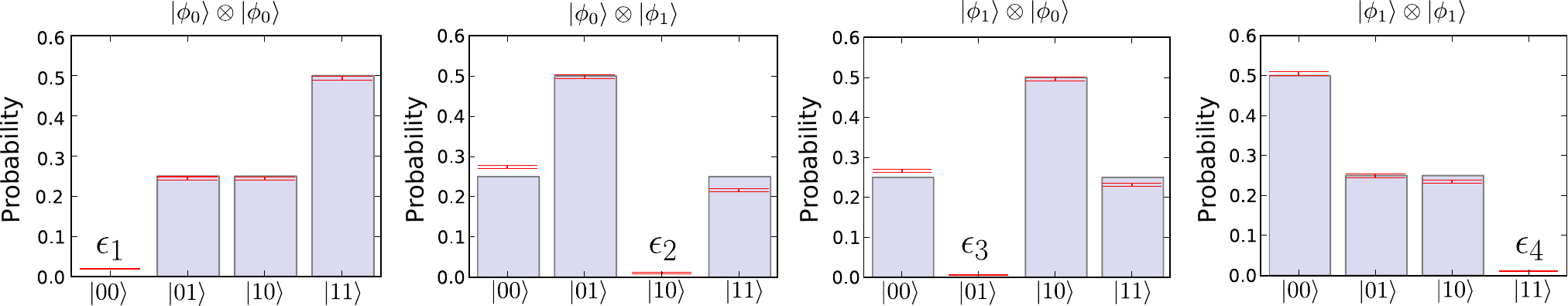}
\caption{Experimental measured probabilities for each input state $\ket{\phi_{i}}\otimes\ket{\phi_{j}} ((i,j)\in\{0,1\})$ (red lines) versus quantum predictions (grey bars). Ideally, for each input state one probability $\epsilon_{k}$ is expected to be zero. The experimental measured probabilities $\{\epsilon_1,\epsilon_2, \epsilon_3, \epsilon_4\} =\{1.88\pm 0.5 \%, 1.04\pm 0.10\%, 0.6\pm 0.08\%, 1.04\pm 0.50\%\}$ are close to zero and therefore we rule out the $\psi$-epistemic model by over 4$\sigma$. The uncertainty of each measurement outcome is of statistical nature and based on projection noise.}
\label{result}
\end{figure*}

\section{Discussion}

The result of Ref.~\cite{PBR} can be considered as a no-go theorem for interpretations of quantum theory, analogous to Bell's theorem. Each theorem states that a certain class of theories must make different predictions from quantum theory -- \emph{locally causal} theories in the case of Bell's theorem and $\psi$-\emph{epistemic} theories in the case of Ref.~\cite{PBR}.
We have suggested that a natural threshold is defined by quantum state discrimination. Given the assumptions of the analysis, our experimental data rules out models that satisfy Eq.(\ref{equaltracedistances}), i.e., those in which errors in quantum state discrimination are explained entirely in terms of overlapping classical probability distributions.

What assumptions are needed? First, both Bell's theorem and Ref.~\cite{PBR} assume that a system has an objective physical state, to which it makes sense to attach the label $\lambda$. We will not say more about this -- it is the possibility of recovering experimental data under such an assumption that is being explored. Second, the theorem of Ref.~\cite{PBR} needs to assume \emph{preparation independence}, that is, Eq.(\ref{independence}). An ideal implementation of the protocol of Ref.~\cite{PBR} would involve preparations of quantum systems that are independent by any reasonable scientific judgement (say, one on Earth and one on Mars, spacelike separated, using apparatuses manufactured in separate factories, each unaware of the other...). In this case, we would not regard the assumption of preparation independence as strong: it is a basic tenet of physical science that one can perform independent experiments, and when this is done, the relevant systems are uncorrelated. An assumption needed by Bell's theorem, that experimenters can freely choose measurements, has a similar character. In each case, models can be constructed that violate the assumption and evade the theorem, but these are highly contrived~\cite{indep,indep2,ljbr}.

Real experiments, however, deviate from the ideal in various ways. 
In our experiment, the quantum systems are ions in close proximity. The judgement that these are independent systems, and that Eq.(\ref{independence}) is reasonable, derives from the fact that quantum theory itself assigns a product pure state (i.e., the judgement does \emph{not} come from more general background considerations, as in the ideal case). Furthermore, the decision whether to prepare one ion in the state $\ket{\psi_0}$ or $\ket{\psi_1}$ has a measurable effect on the quantum state of the other ion, hence these cannot be regarded as truly independent preparations and in that sense corresponds to a potential state preparation loophole. This is reflected in the inequality~\eqref{Deq}, which is designed to allow for the fact that when one ion is rotated from the state $\ket{\psi_0}$ to the state $\ket{\psi_1}$, a small crosstalk effect changes the state of the other ion from $\ket{\psi_0}$ to $\ket{\psi_1'}$. The inequality \eqref{Deq} allows the corresponding distributions $\mu_0(\lambda)$ and $\mu_1(\lambda)$ to be different.

However, the derivation of inequality \eqref{Deq} does still assume that a preparation of a quantum state which is (close to) a direct product, e.g., $\ket{\psi_0}\otimes \ket{\psi_1}$, corresponds to a distribution which is also (close to) a product, e.g., $\mu_0(\lambda_1)\mu_1(\lambda_2)$. Given the crosstalk effect, a model which violates this assumption is not as farfetched as such a model would be, were the preparations completely independent at the quantum level. Future experiments may be able to achieve this.

Finally, and perhaps most significantly, the quantum states prepared will not in fact have been completely pure. This will not affect the bound on classical trace distances that we obtained, but it is not clear exactly what a natural hypothesis for these classical trace distances would be. One certainly cannot ask that the classical trace distance is equal to the quantum trace distance in the case of mixed states, because sometimes even identical mixed quantum states cannot be represented by the same classical probability distribution~\cite{spekkens2}. This opens
the possibility for a ``mixed states'' preparations loophole.

The fundamental question of the interpretation of the wave function of a quantum system was partially
resolved. If systems have real states, regardless of an experimenter
or measurements performed, then a natural question is whether the quantum state is epistemic, i.e. corresponding
merely to knowledge of these underlying real states. In the presented manuscript we tested for this specific
possibility and ruled out the most natural class of such models to a high degree of confidence.

\begin{acknowledgments}
We thank Tracy Northup and Mike Brownnutt for critically reading our manuscript.
We gratefully acknowledge support by the Austrian Science Fund (FWF), through the SFB FoQus (FWF Project No. F4006-N16),
by the European Commission (AQUTE), by IARPA as well as the Institut f\"{u}r Quantenoptik und Quanteninformation GmbH. We also acknowledge support
from the UK Engineering and Physical Sciences Research Council, and the CHIST-ERA DIQIP project. Research at Perimeter Institute is supported by
the Government of Canada through Industry Canada and by the Province of Ontario through the Ministry of Research and Innovation.
\end{acknowledgments}

\pagebreak

\section{Appendix}

\section{Supplementary information: Main result}
  \begin{theorem}
    Let $\ket{\psi_0}, \ket{\psi_0'}, \ket{\psi_1}, \ket{\psi_1'}$ be qubit states. Suppose 2 qubits are independently prepared in one of the four states
  \begin{equation}
  \ket{\psi_0}\otimes\ket{\psi_0}, \ket{\psi_0'}\otimes\ket{\psi_1}, \ket{\psi_1}\otimes\ket{\psi_0'}, \ket{\psi_1'}\otimes\ket{\psi_1'}
  \end{equation}
  and a measurement is performed where outcome $k$ has probability $\epsilon_k$ on the $k$-th quantum state. Let $\epsilon = \frac14 \sum\limits_{i=1}^4 \epsilon_k$. A model that reproduces these results must satisfy
  \begin{equation}
  D(\mu_0, \mu_1) + D(\mu_0, \mu_{0'}) + D(\mu_1, \mu_{1'}) \geq 1 - 2\sqrt{\epsilon}
  \end{equation}
  \end{theorem}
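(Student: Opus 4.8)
The plan is to translate the statement into the ontological model and reduce it to two elementary inequalities. In the model the measurement is described by response functions $\xi_k(\lambda_1,\lambda_2)\ge 0$ giving the probability of outcome $k$ on the ontic state $(\lambda_1,\lambda_2)$, with $\sum_{k=1}^4 \xi_k = 1$ since the four outcomes are exhaustive. By preparation independence the $k$-th preparation is a product, $P_1=\mu_0\otimes\mu_0$, $P_2=\mu_{0'}\otimes\mu_1$, $P_3=\mu_1\otimes\mu_{0'}$, $P_4=\mu_{1'}\otimes\mu_{1'}$, and the data impose $\int \xi_k P_k = \epsilon_k$. The whole result will follow from exhibiting a single function lying below all four products and controlling its mass from both sides.

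First I would construct a common lower bound. Let $m=\min(\mu_0,\mu_{0'},\mu_1,\mu_{1'})$ be the pointwise minimum of the four single-system distributions. Since every factor occurring in each $P_k$ is one of these four distributions, all of which dominate $m$, the product $m\otimes m$ satisfies $m\otimes m \le P_k$ pointwise for every $k$. Using $\xi_k\ge 0$ and $\sum_k\xi_k=1$,
\begin{equation}
4\epsilon=\sum_{k=1}^4\int \xi_k P_k \ge \sum_{k=1}^4\int \xi_k\,(m\otimes m)=\int m\otimes m=\Big(\int m\Big)^2 .
\end{equation}
Hence $\int m \le 2\sqrt{\epsilon}$. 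This is where the square root enters: it is forced by the product structure of the preparations.

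The crux is then to bound $\int m$ from below by the three trace distances; naively minimising against a single fixed reference produces the wrong distances, so the order of estimation matters. I would peel the distributions off one at a time along the chain $\mu_{0'}-\mu_0-\mu_1-\mu_{1'}$, repeatedly using $\int\min(f,g)=\int f-\int (f-g)_+$ together with the fact that $f\le h$ implies $(f-g)_+\le (h-g)_+$. Starting from $\int\min(\mu_0,\mu_1)=1-D(\mu_0,\mu_1)$, adjoining $\mu_{0'}$ costs at most $\int(\mu_0-\mu_{0'})_+=D(\mu_0,\mu_{0'})$ (bounding the running minimum, which is $\le\mu_0$), and then adjoining $\mu_{1'}$ costs at most $\int(\mu_1-\mu_{1'})_+=D(\mu_1,\mu_{1'})$ (the running minimum is still $\le\mu_1$). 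This yields
\begin{equation}
\int m \ge 1 - D(\mu_0,\mu_1)-D(\mu_0,\mu_{0'})-D(\mu_1,\mu_{1'}).
\end{equation}

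Combining the two bounds gives $1-D(\mu_0,\mu_1)-D(\mu_0,\mu_{0'})-D(\mu_1,\mu_{1'})\le \int m\le 2\sqrt\epsilon$, which rearranges to the claim. As a check, in the crosstalk-free limit $\mu_{0'}=\mu_0$, $\mu_{1'}=\mu_1$ one has $m=\min(\mu_0,\mu_1)$ and the statement collapses to the two-system PBR robustness bound $D(\mu_0,\mu_1)\ge 1-2\sqrt\epsilon$. I expect the telescoping step to be the main obstacle: one must choose the peeling order so that each newly adjoined distribution is compared against the partner ($\mu_0$ or $\mu_1$) whose distance to it appears in the target inequality, rather than against an arbitrary reference, which would overcount $D(\mu_0,\mu_1)$ through the triangle inequality and weaken the bound.
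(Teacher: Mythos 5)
Your proposal is correct and is essentially the paper's own argument: your $\int m$ is the paper's $4$-overlap $\omega(\mu_0,\mu_{0'},\mu_1,\mu_{1'})$, your bound $(\int m)^2 \le 4\epsilon$ combines the paper's product-structure inequality with its measurement-statistics bound $\omega(\{\mu_k\})\le 4\epsilon$, and your peeling step along the chain $\mu_{0'}\!-\!\mu_0\!-\!\mu_1\!-\!\mu_{1'}$ is pointwise identical to the paper's Lemma, since $\min(a,b,c)\ge\min(a,b)+\min(b,c)-b$ is the same as subtracting $(b-a)_+$ from the running minimum. The only cosmetic difference is that you bypass the intermediate joint overlap $\omega(\{\mu_k\})$ by inserting $m\otimes m$ directly under the measurement sum.
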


To prove this we define the $k$-overlap of probability distributions $\mu_1, \dotsc, \mu_k$, as in \cite{PBR}, by
\begin{equation}
\omega(\mu_1,\ldots,\mu_k) = \int_\Lambda \min_i \mu_i(\lambda) d\lambda.
\end{equation}

We will need the following link between 4-overlaps and 2-overlaps:
\begin{lemma}\label{4o2o}
\begin{multline}
  \omega(\mu_A, \mu_B, \mu_C, \mu_D) \geq \\\omega(\mu_A, \mu_B) + \omega(\mu_B, \mu_C) + \omega(\mu_C, \mu_D) - 2.
\end{multline}
\end{lemma}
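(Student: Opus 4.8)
The plan is to reduce the claimed inequality to a pointwise statement about the integrands and then integrate, using as the only global input that each $\mu_X$ is a normalised probability distribution. Since $\int_\Lambda \mu_B \, d\lambda = \int_\Lambda \mu_C \, d\lambda = 1$, I can rewrite the constant $-2$ as $-\int_\Lambda (\mu_B + \mu_C)\, d\lambda$ and pull the entire right-hand side under a single integral sign. Writing $a = \mu_A(\lambda)$, $b = \mu_B(\lambda)$, $c = \mu_C(\lambda)$, $d = \mu_D(\lambda) \geq 0$ at a fixed $\lambda$, and recalling the definition of $\omega$, the whole lemma follows once I establish the pointwise inequality
\begin{equation}
\min(a,b) + \min(b,c) + \min(c,d) \leq \min(a,b,c,d) + b + c
\end{equation}
for all $\lambda$, after which integrating both sides over $\Lambda$ gives exactly the stated bound on the four-overlap.

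To prove this pointwise inequality I would argue by cases according to which of the four nonnegative numbers is smallest, setting $m = \min(a,b,c,d)$. The expression is invariant under the reflection $(a,b,c,d) \mapsto (d,c,b,a)$, which swaps $a \leftrightarrow d$ and $b \leftrightarrow c$, so it suffices to treat the two representative cases $m = a$ and $m = b$. If $m = a$, then $\min(a,b) = a$, while $\min(b,c) \leq b$ and $\min(c,d) \leq c$, so the left-hand side is at most $a + b + c = m + b + c$. If $m = b$, then $\min(a,b) = \min(b,c) = b = m$ and $\min(c,d) \leq c$, so the left-hand side is at most $m + m + c = m + b + c$. Both cases match the target, and the reflection disposes of $m = d$ and $m = c$.

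The step I expect to be the genuine subtlety is exactly this case analysis, specifically making sure the bound lands on $\min(a,b,c,d)$ rather than on a larger partial minimum. A naive telescoping that repeatedly applies a two-variable estimate such as $\min(b,c) - (b-a)^+ \leq \min(a,c)$ quietly drops the middle distribution and only yields $\min(a,c,d)$, which is too weak. What makes the argument close is that the two distributions carrying the negative coefficients, $\mu_B$ and $\mu_C$, are precisely the two ``middle'' arguments of the chain of overlaps, so that whenever one of them is the minimiser the two terms it appears in collapse to $m$ and the remaining slack is absorbed by $\min(c,d) \leq c$ (or $\min(a,b) \leq b$). Once the pointwise inequality is in hand, the passage back to the integral form is immediate and requires no further estimates.
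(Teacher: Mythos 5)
Your proof is correct, and it shares the paper's skeleton: both reduce the lemma to a pointwise inequality among $a=\mu_A(\lambda)$, $b=\mu_B(\lambda)$, $c=\mu_C(\lambda)$, $d=\mu_D(\lambda)$, namely $\min(a,b)+\min(b,c)+\min(c,d)\leq\min(a,b,c,d)+b+c$, then integrate and use the normalisation of the two middle distributions $\mu_B$ and $\mu_C$. Where you differ is in how that pointwise inequality is established. The paper iterates a three-variable bound: from the identity $\min(a,b,c)=\min(a,b)+\min(b,c)-f(a,b,c)$ with remainder $f(a,b,c)\leq b$ it gets $\min(a,b,c)\geq\min(a,b)+\min(b,c)-b$, and applies this twice via the grouping $\min\left(a,b,\min(c,d)\right)$, peeling off $-b$ and then $-c$. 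You instead prove the four-variable statement in one stroke by a case analysis on the global minimiser $m$, halving the work with the reflection $(a,b,c,d)\mapsto(d,c,b,a)$; the symmetry claim is valid (the inequality is invariant under it, and it exchanges the cases $m=d\leftrightarrow m=a$ and $m=c\leftrightarrow m=b$), and both explicit cases check out, including at ties. Your version is arguably more transparent: it exhibits directly why the two subtracted terms are precisely the middle arguments---each occurs in two overlap terms, which collapse to $m$ whenever it is the minimiser. The paper's version is set up to iterate: applying the same three-variable step repeatedly yields the general chain bound $\omega(\mu_1,\dotsc,\mu_n)\geq\sum_{i=1}^{n-1}\omega(\mu_i,\mu_{i+1})-(n-2)$ for any number of distributions, of which the lemma is the $n=4$ instance. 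Either argument fully suffices for the theorem as used in the paper.
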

\begin{proof}
  First notice that for real numbers $a,b,c$
  \begin{equation}
    \min(a,b,c) = \min(a,b) + \min(b,c) - f(a,b,c),
  \end{equation}
  where
  \begin{equation}
    f(a,b,c) =
    \begin{cases}
      \min(b,c) & a \leq b,c \\
      b & b \leq a,c\\
      \min(a,b) & c \leq a,b
    \end{cases}.
  \end{equation}
  Since $f(a,b,c) \leq b$ we have
  \begin{equation}
    \min(a,b,c) \geq \min(a,b) + \min(b,c) - b.
  \end{equation}
  By two applications of the above we obtain that for real numbers $a,b,c,d$
  \begin{multline}
    \min(a,b,c,d) = \min\left( a,b,\min(c,d) \right) \geq \\ \min(a,b) + \min\left( b, \min(c,d) \right) - b =\\ \min(a,b) + \min(b,c,d) - b \\\geq \min(a,b) + \min(b,c) + \min(c,d) - b - c.
  \end{multline}
  Let $a = \mu_A(\lambda), b = \mu_B(\lambda), c = \mu_C(\lambda), d=\mu_D(\lambda)$, integrate each side with respect to $\lambda$, and recall that $\mu_B$ and $\mu_C$ are normalised to obtain the result.
\end{proof}

  \begin{proof}[Proof of Theorem]
  Since the systems are prepared independently, the resulting physical states are distributed according to
  \begin{align}
  \mu_A(\lambda_1, \lambda_2) &= \mu_0 (\lambda_1) \times \mu_0 (\lambda_2),\\
  \mu_B(\lambda_1, \lambda_2) &= \mu_{0'} (\lambda_1) \times \mu_1 (\lambda_2),\\
  \mu_C(\lambda_1, \lambda_2) &= \mu_1 (\lambda_1) \times \mu_{0'} (\lambda_2),\\
  \mu_D(\lambda_1, \lambda_2) &= \mu_{1'} (\lambda_1) \times \mu_{1'} (\lambda_2).
\end{align}
Hence
\begin{multline}
  \min_{k \in \{A,B,C,D\}} \mu_k(\lambda_1, \lambda_2) \geq \\\min\{\mu_0(\lambda_1), \mu_{0'}(\lambda_1), \mu_1(\lambda_1), \mu_{1'}(\lambda_1)\} \times \\
\min\{\mu_0(\lambda_2), \mu_{0'}(\lambda_2), \mu_1(\lambda_2), \mu_{1'}(\lambda_2)\}.
\end{multline}
Integrating both sides
\begin{equation}
\omega(\{\mu_k\}) \geq \left( \omega(\mu_0, \mu_{0'},\mu_1, \mu_{1'}) \right)^2,
\label{multisystem2single}
\end{equation}
which square roots to
\begin{equation}
\sqrt{\omega(\{\mu_k\})} \geq \omega(\mu_0, \mu_{0'},\mu_1, \mu_{1'}).
\end{equation}
Applying the Lemma (notice that we can freely re-order the arguments of $\omega$) we obtain
\begin{equation}
\sqrt{\omega(\{\mu_k\})} \geq \omega(\mu_0, \mu_1) + \omega(\mu_0, \mu_{0'}) + \omega(\mu_1, \mu_{1'}) - 2.
\end{equation}
Recalling that the classical trace distance $D(\mu_0, \mu_1) = 1 - \omega(\mu_0, \mu_1)$ this becomes
\begin{equation}\label{tracesum}
\sqrt{\omega(\{\mu_k\})} \geq 1 - D(\mu_0, \mu_1) - D(\mu_0, \mu_{0'}) - D(\mu_1, \mu_{1'}).
\end{equation}
Meanwhile, from the observed probabilities we have that
\begin{equation}
\int_{\Lambda^n} \xi_{M,k}(\vec{\lambda}) \mu_{k}(\vec{\lambda}) \mathrm{d}\vec{\lambda} = \epsilon_k.
\end{equation}
Since $\min_i \mu_i(\vec{\lambda}) \leq \mu_k(\vec{\lambda})$, and both $\xi_{M,k}(\vec{\lambda})$ and $\mu_k(\vec{\lambda})$ are non-negative,
\begin{equation}
\int_{\Lambda^n} \xi_{M,k}(\vec{\lambda}) \min_{i} \mu_{i}(\vec{\lambda}) \mathrm{d}\vec{\lambda} \leq \epsilon_k.
\end{equation}
Finally, sum over $k$ and use the normalization $\sum_{k} \xi_{M,k}(\vec{\lambda}) = 1$ to obtain
\begin{equation}
\omega\left( \{ \mu_{k} \} \right) \leq 4 \epsilon.\label{fourepsilon}
\end{equation}
Square rooting each side gives
\begin{equation}\label{epsilonbound}
\sqrt{\omega\left( \{ \mu_{k} \} \right)} \leq 2 \sqrt{\epsilon}.
\end{equation}
Combining this with \eqref{tracesum} gives the desired result.
\end{proof}

\section{Non-identical qubits}
In the above it is assumed that the four quantum states $\ket{\psi_0}, \ket{\psi_0'}, \ket{\psi_1}$ and $\ket{\psi_1'}$ are the same for Alice and Bob, and the same underlying model of physical states $\lambda \in \Lambda$ is applied to both qubits. In reality the conditions of the qubits might not be identical, and so we should consider quantum states
  \begin{equation}
  \ket{\psi_0}\otimes\ket{\phi_0}, \ket{\psi_0'}\otimes\ket{\phi_1}, \ket{\psi_1}\otimes\ket{\phi_0'}, \ket{\psi_1'}\otimes\ket{\phi_1'},
  \end{equation}
  Corresponding to probability distributions over $\Lambda^a \times \Lambda^b$
  \begin{align}
  \mu_A(\lambda_1, \lambda_2) &= \mu^a_0 (\lambda_1) \times \mu^b_0 (\lambda_2),\\
  \mu_B(\lambda_1, \lambda_2) &= \mu^a_{0'} (\lambda_1) \times \mu^b_1 (\lambda_2),\\
  \mu_C(\lambda_1, \lambda_2) &= \mu^a_1 (\lambda_1) \times \mu^b_{0'} (\lambda_2),\\
  \mu_D(\lambda_1, \lambda_2) &= \mu^a_{1'} (\lambda_1) \times \mu^b_{1'} (\lambda_2).
\end{align}
Hence in place of Eq.~\eqref{multisystem2single} we obtain
\begin{equation}
\omega(\{\mu_k\}) \geq \omega(\mu^a_0, \mu^a_{0'},\mu^a_1, \mu^a_{1'}) \times \omega(\mu^b_0, \mu^b_{0'},\mu^b_1, \mu^b_{1'})
\end{equation}
Applying the Lemma and 2-overlaps to classical trace distances then gives
\begin{multline}
\omega(\{\mu_k\}) \geq \left(1 - D(\mu^a_0, \mu^a_1) - D(\mu^a_0, \mu^a_{0'}) - D(\mu^a_1, \mu^a_{1'})\right)\\\times \left(1 - D(\mu^b_0, \mu^b_1) - D(\mu^b_0, \mu^b_{0'}) - D(\mu^b_1, \mu^b_{1'})\right).
\end{multline}
Combining this with $\eqref{fourepsilon}$ gives
\begin{multline}
\left(1 - D(\mu^a_0, \mu^a_1) - D(\mu^a_0, \mu^a_{0'}) - D(\mu^a_1, \mu^a_{1'})\right)\\\times \left(1 - D(\mu^b_0, \mu^b_1) - D(\mu^b_0, \mu^b_{0'}) - D(\mu^b_1, \mu^b_{1'})\right) \leq 4\epsilon.
\end{multline}
If we suppose that, for both systems, the classical trace distance equals the quantum trace distance, and that (to good approximation) the quantum trace distances between the relevant states are as before, we obtain exactly the same bound on $\epsilon$.


\end{document}